\title{A note on VNP-completeness and border complexity} 
\titlerunning{A note on VNP-completeness and border complexity} 
\author{Christian Ikenmeyer}{University of Liverpool}{christian.ikenmeyer@liverpool.ac.uk}{}{CI supported by DFG grant IK 116/2-1.}
\author{Abhiroop Sanyal}{Chennai Mathematical Institute}{abhiroop.sanyal@gmail.com}{}{}
\authorrunning{Christian Ikenmeyer and Abhiroop Sanyal} 
\keywords{algebraic complexity theory, VNP, border complexity, reductions, completeness, topology} 
\newcommand{\pp}{\textnormal{\textup{\textsf{p}}}}
\newcommand{\cc}{\textnormal{\textup{\textsf{c}}}}
\newcommand{\HC}{\mathrm{HC}}
\newcommand{\VP}{\mathrm{VP}}
\newcommand{\VNP}{\mathrm{VNP}}
\newcommand{\VNPC}{\mathrm{VNPC}}
\newcommand{\VF}{\mathrm{VF}}
\newcommand{\VFC}{\mathrm{VFC}}
\newcommand{\VBP}{\mathrm{VBP}}
\newcommand{\VBPC}{\mathrm{VBPC}}
\newcommand{\VPC}{\mathrm{VPC}}
\newcommand{\VWaring}{\mathrm{VWaring}}
\newcommand{\dc}{\textup{\textrm{dc}}}
\theoremstyle{definition}
\newtheorem*{example*}{Example ($\ast$)}
\newcommand\blfootnote[1]{%
  \begingroup
  \renewcommand\thefootnote{}\footnote{#1}%
  \addtocounter{footnote}{-1}%
  \endgroup
}
\begin{document}
\maketitle

\blfootnote{2021-Dec-01}

\begin{abstract}
In 1979 Valiant introduced the complexity class VNP of p-definable families of polynomials, he defined the reduction notion known as p-projection and he proved that the permanent polynomial and the Hamiltonian cycle polynomial are VNP-complete under p-projections.

In 2001 Mulmuley and Sohoni (and independently B\"urgisser) introduced the notion of border complexity to the study of the algebraic complexity of polynomials.
In this algebraic machine model,
instead of insisting on exact computation,
approximations are allowed.
This gives VNP the structure of a topological space.
In this short note we study the set VNPC of VNP-complete polynomials. We show that the complement VNP $\setminus$ VNPC lies dense in VNP. Quite surprisingly, we also prove that VNPC lies dense in VNP. We prove analogous statements for the complexity classes VF, VBP, and~VP.

The density of VNP $\setminus$ VNPC holds for several different reduction notions:
p-projections, border p-projections, c-reductions, and border c-reductions.
We compare the relationships of the completeness notions under these reductions and separate most of the corresponding sets.
Border reduction notions were introduced by Bringmann, Ikenmeyer, and Zuiddam (JACM 2018).
Our paper is the first structured study of border reduction notions.
\end{abstract}

\maketitle

\section{Introduction}
Valiant's famous determinant versus permanent conjecture \cite{Val:79} states that the algebraic complexity class VBP (polynomials that can be written as determinants of polynomially large matrices of linear polynomials) is strictly contained in the class VNP (polynomials that can be written as Hamilton cycle polynomials\footnote{in characteristic $\neq 2$ the Hamilton cycle polynomial can be replaced by the permanent polynomial for all our statements.} of polynomially large matrices of linear polynomials, see Section~\ref{sec:actprelims}).
In 2001 Mulmuley and Sohoni \cite{gct1} in their Geometric Complexity Theory approach towards resolving Valiant's conjecture stated a strengthening of the conjecture ($\VNP\not\subseteq\overline{\text{VBP}}$) that is based on \emph{border complexity}, which was stated independently for circuits by B\"urgisser \cite[hypothesis (12)]{bue:04} ($\VNP\not\subseteq\overline{\text{VP}}$).
Border complexity was first studied in the area of fast matrix multiplication algorithms, where it led to significant algorithmic speedups \cite{bini1980approximate, Bini:80,schonhage1981partial,romani1982some,coppersmith1982asymptotic,strassen1986asymptotic,coppersmith1987matrix}.
The advantage of working with the closures of complexity classes is that this makes a large set of tools from algebraic geometry and representation theory available, see e.g.\ \cite{IB:18}.
The hope is that VBP and VNP can still be separated in this coarser setting.
Indeed, it is a major open question in geometric complexity theory whether or not $\text{VBP}=\overline{\text{VBP}}$, see \cite{grochow_et_al:LIPIcs:2016:6314}. If $\text{VBP}=\overline{\text{VBP}}$, then Valiant's conjecture must in principle be provable by algebraic geometry, provided it is true.
If $\VNP \subseteq \overline{\text{VBP}}$, then the Geometric Complexity Theory approach fails unsalvageably, while Valiant's conjecture could still be true.

\begin{figure}
\centering
\begin{tikzpicture}
\draw (0,2) ellipse (3cm and 3.5cm);
\draw (0,3.5) ellipse (1.5cm and 0.5cm);
\draw (0,0.5) ellipse (2cm and 1.5cm);
\draw (0,0.25) ellipse (1.5cm and 1cm);
\draw (0,0) ellipse (1cm and 0.5cm);
\node at (0,0.25) {VF};
\node at (0,1) {VBP};
\node at (0,1.75) {VP};
\node at (0,3.75) {VNPC};
\node at (0,5) {VNP};
\end{tikzpicture}
\caption{The known inclusions of the classical algebraic complexity classes.
VF is the class of families of polynomials with polynomially sized formulas,
VBP is the class of families of polynomials that can be written as polynomially large determinants of matrices of linear polynomials,
VP is the class of families of polynomials with polynomially sized circuits.
VNPC is the set of VNP-complete families. From a topological perspective such a depiction can be misleading, because VNPC lies dense in VNP and also VNP$\setminus$VNPC lies dense in VNP (under \pp-projections), see Theorem~\ref{thm:density}.}
\label{fig:compl}
\end{figure}
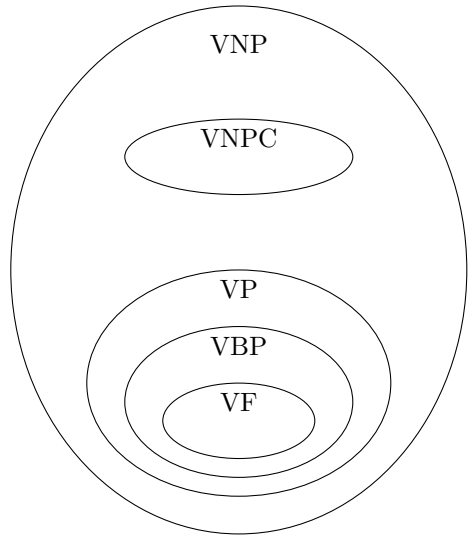
In Boolean complexity theory the relationship between complexity classes is often depicted in diagrams. An analogue for the classical algebraic complexity classes is given in Figure~\ref{fig:compl}.
In this paper we see that such a depiction presents misleading topological information:
We study the set of VNP-complete polynomials and its complement and see that
surprisingly both lie dense in VNP, see Theorem~\ref{thm:density}. We prove an analogous result for VF, VBP, and VP. This highlights that the topology is very coarse.

We take the methods for proving Theorem~\ref{thm:density} as a basis for studying VNP-completeness under different reduction notions. In particular we study border-$\pp$-projections, which were recently introduced in \cite{DBLP:journals/jacm/BringmannIZ18} with a focus on the border-$\pp$-projections of the iterated $2 \times 2$ matrix multiplication polynomial.
We get several separations of the power of different reduction notions in Theorem~\ref{thm:reductions}.
Our paper gives the first analysis of border reduction notions and their relative complexity in comparison to non-border reduction notions.
Moreover, our paper is the first to study border complexity analogues of the so-called oracle reductions (or \cc-reductions) that were introduced in \cite{burgisser1999structure}.

\section{Preliminaries}
\subsection{Algebraic Complexity Theory}\label{sec:actprelims}

Fix a field~$\mathbb{F}$.
An algebraic circuit is defined as a rooted\footnote{The digraph has a single node with out-degree 0.} directed acyclic graph which has its leaf nodes\footnote{Leaf nodes are the nodes with in-degree 0.} labelled with variables $\{x_{1}, x_{2}, \ldots , x_{n}\}$ or field constants, and the internal nodes labelled with $\times$ (``multiplication gates'') and $+$ (``addition gates'').
By induction over the circuit structure, each internal node computes a polynomial $f\in \mathbb{F}[x_{1}, x_{2}, \ldots, x_{n}]$.
The output of a circuit is defined as the polynomial computed at its root.
The size of an algebraic circuit is defined as the number of nodes in the circuit.

A sequence of natural numbers $(t) = (t_n)_{n \in \mathbb N}$ is called \emph{polynomially bounded} if there exists a polynomial function $p$ such that for all $n\in \mathbb N$ we have $t_n \leq p(n)$.
A sequence $(f) = (f_{n})_{n \in \mathbb N}$ of multivariate polynomials is defined to be a $\pp$-family if the number of variables in $f_{n}$ and the degree of $f_{n}$ are both polynomially bounded.
The complexity class $\VP$ is defined as the set of all $\pp$-families that have algebraic circuits whose size is polynomially bounded.
If we only allow \emph{skew} circuits, i.e., 
circuits for which each multiplication gate is adjacent to a leaf node, then we get the complexity class $\VBP$.
If instead we insist on the circuits to be rooted \emph{trees}, then we get the complexity class $\VF$.
We have $\VF \subseteq \VBP \subseteq \VP$. See \cite{toda1992classes} or \cite{Sapth} for an exposition about these classes, their different definitions, and their containment.

For fixed natural numbers $N$ and $M$,
a polynomial $f \in \mathbb{F}[x_{1}, x_{2}, \ldots , x_{N}]$ is said to be a \emph{projection} of another polynomial $g \in \mathbb{F}[y_{1}, y_{2}, \ldots, y_{M}]$ if $f = g(\alpha_{1}, \alpha_{2}, \ldots , \alpha_{M})$, where $\alpha_{i} \in \left\{x_{1}, x_{2}, \ldots , x_{N}\right\} \cup \mathbb{F}$. This is denoted by $f \leq g$.
A $\pp$-family $(f)$ is said to be the $\textsf{p}$-projection of another $\pp$-family $(g)$, denoted by $(f) \leq_{\textsf{p}} (g)$, if there is a polynomially bounded function~$t: \mathbb{N} \to \mathbb{N}$ such that $f_{n} \leq g_{t(n)}$ for all $n$\footnote{Note the non-uniformity of this notion of reduction as we get to choose the constants $alpha_i$ in the projection fresh (without complexity constraints) for every $n$.}.

$\pp$-projections were introduced by Valiant. Another natural example of reductions are $\cc$-reductions, an algebraic analogue of oracle complexity, which were introduced by B\"urgisser \cite{burgisser1999structure}.
The oracle complexity $L^{g}(f)$ of a polynomial $f$ with oracle $g$ is defined as the minimum size of a circuit with $+,  \times$, and $g$-oracle gates (the output at these gates is the computation of the polynomial $g$ on the input values. The arity of a $g$-oracle gate equals the number of variables in $g$.) that computes the polynomial $f$.
Consider two $\pp$-families $(f)$ and $(g)$. The $\pp$-family $(f)$ is said to be a $\cc$-reduction of $(g)$, denoted by $(f) \leq_{\cc} (g)$, if there exists a polynomially bounded function $t : \mathbb{N} \to \mathbb{N}$ such that $L^{g_{t(n)}}(f_{n})$ is polynomially bounded.
It is clear that $(f) \leq_\pp (g)$ implies $(f) \leq_\cc (g)$.

Let $\mathfrak S_n$ denote the symmetric group on $n$ symbols.
The determinant
is the sequence of homogeneous degree $n$ polynomials in $n^2$ many variables defined via
$\det_n \coloneqq \sum_{\pi\in\mathfrak S_n}\text{sign}(\pi)\prod_{i=1}^n x_{i,\pi(i)}$. Valiant showed that $(f) \in \VBP$ if and only if $(f) \leq_\pp (\det)$.

Let $C_{n} \subseteq \mathfrak S_n$ be the subset of length $n$ cycles. The Hamiltonian cycle family $(\HC)$ is the sequence of homogeneous degree $n$ polynomials $\HC_n$ on $n^2$ many variables defined via
$    \HC_n \coloneqq \sum_{\pi \in C_{n}} \prod_{i=1}^{n}x_{i,\pi(i)}.$
We define the class $\VNP$ as the set of all $\pp$-families $(f)$ that satisfy $(f)\leq_\pp (\HC)$. This is known to be equivalent to $(f)\leq_\cc (\HC)$.
Often a different definition is given that is closer in spirit to the counting complexity class $\#$P, where $\VNP$ is defined as a summation of a $\VP$ function over the Boolean hypercube, but Valiant \cite{Val:79} showed that these definitions are equivalent.
In particular, it is easy to see that $\VP \subseteq \VNP$.
Valiant's famous conjectures are $\text{VF} \neq \VNP$, $\VBP \neq \VNP$, and $\text{VP} \neq \VNP$. To find candidates of $\pp$-families in $\VNP$ that are outside of $\VF$, $\VBP$ or $\VP$, the following notion of $\VNP$-completeness is useful.

A $\pp$-family $(f)$ is defined to be $\VNP$-$\pp$-complete if $(f)\in\VNP$ and $(\HC) \leq_\pp (f)$.
The set of all $\VNP$-$\pp$-complete $\pp$-families is denoted by $\VNPC(\leq_\pp)$.
Analogously, a $\pp$-family $(f)$ is defined to be $\VNP$-$\cc$-complete if $(f)\in\VNP$ and $(\HC) \leq_\cc (f)$.
The set of all $\VNP$-$\cc$-complete $\pp$-families is denoted by $\VNPC(\leq_\cc)$.
It is clear that $\VNPC(\leq_\pp) \subseteq \VNPC(\leq_\cc)$. The fact that these two sets are different was established in \cite{IM:18} with a short argument.

The main motivation behind VNP-completeness comes from the following simple observation:
If we would find $(g)$ such that both $(g) \in \VNPC(\leq_\pp)$ and $(g) \in \text{VP}$, then for all $(f) \in \VNP$ we would have $(f) \leq_\pp (\HC) \leq_\pp (g)$, and by transitivity
$(f) \leq_\pp (g)$, which implies $(f) \in \text{VP}$, thus $\text{VP}=\VNP$. This observation similarly holds for $\VF$ and $\VBP$ instead of $\VP$.

As the natural variation of the above, a $\pp$-family $(f)$ is defined to be $\VF$-$\pp$-complete if $(f)\in\VF$ and $(g) \leq_\pp (f)$ for every $(g)\in\VF$.
The set of all $\VF$-$\pp$-complete $\pp$-families is denoted by $\VFC(\leq_\pp)$. The iterated $3\times 3$ matrix multiplication polynomial family is an example of an element in $\VFC(\leq_\pp)$ \cite{ben1992computing}.
A $\pp$-family $(f)$ is defined to be $\VBP$-$\pp$-complete if $(f)\in\VBP$ and $(g) \leq_\pp (f)$ for every $(g)\in\VBP$.
The set of all $\VBP$-$\pp$-complete $\pp$-families is denoted by $\VBPC(\leq_\pp)$. The determinant polynomial family is an example of an element in $\VBPC(\leq_\pp)$, see \cite{toda1992classes}.
A $\pp$-family $(f)$ is defined to be $\VP$-$\pp$-complete if $(f)\in\VP$ and $(g) \leq_\pp (f)$ for every $(g)\in\VP$.
The set of all $\VP$-$\pp$-complete $\pp$-families is denoted by $\VPC(\leq_\pp)$. There exist specific graph homomorphism polynomial families that are in \mbox{$\VPC(\leq_\pp)$}, see the recent \cite{mahajan2018some}.
This resolved a long-standing open problem of finding natural problems in~$\VPC(\leq_\pp)$.

\subsection{Border complexity and topology}\label{sec:topology}
In this section we define so-called \emph{border complexity analogues} to the reduction notions in Section~\ref{sec:actprelims}.
This was first done explicitly in \cite{DBLP:journals/jacm/BringmannIZ18}.
In the rest of the paper we assume that our base field $\mathbb F$ is $\mathbb Q$, $\mathbb R$, or~$\mathbb C$\footnote{This also means that the Hamilton cycle polynomial can be replaced by the permanent polynomial for all our statements \cite{Val:79}. We chose $\HC$ to be consistent with \cite{IM:18}, which also studies other fields.
In Lemma~\ref{lem:roots} we require the field to be large enough. In Lemma~\ref{2} we require the field to be a subfield of $\mathbb C$, but it is possible to phrase our density results in a field-independent way that is purely algebraic, see~\cite{IS:21v2}. We focus only on the most natural case of defining closures, which is over an infinite field such that we have a metric on~$P$.
}.
The ring $P$ of polynomials over $\mathbb F$ in any number of variables is a countably infinite dimensional metric vector space with the coefficient-wise Euclidean metric $\rho(f,g) \coloneqq ||f-g||$, where $||.||$ is the Euclidean norm of the vector of coefficients.
We define $R$ to be the vector space of sequences $(f) \coloneqq (f_n)_{n \in \mathbb N}$ of elements from~$P$:
\begin{equation*}
    R \coloneqq P^{\mathbb N}.
\end{equation*}
Let
\[
D \coloneqq (\mathbb R_{>0})^{\mathbb N}
\]
be the set of sequences of positive real numbers.
We endow $R$ with the \emph{box topology}\footnote{The usual product topology used in an early preprint of this paper \cite{IS:21} is too coarse to correctly capture the definitions of $\overline{\VBP}$ etc from the literature, which was kindly pointed out by Josh Grochow.
On the other hand, the closure in \cite{DBLP:journals/jacm/BringmannIZ18} is only defined for subsets of $R$ that are defined via an upper bound on an algebraic complexity measure, which provides no way to take the closure of the set of VNP-complete $\pp$-families, or of their complement.
Since we are the first to define a closure operator that works for all subsets of $R$, in Section~\ref{sec:boxtopology} we prove that our definition coincides with the defintion from the literature in all cases.}: The basic open sets are the sequences of open balls of radii $(\delta) \in D$ around elements $(f)\in R$, formally
\[
B_{(f),(\delta)} \coloneqq \{(g) \in R \mid \forall n\in\mathbb N: \rho(f_n,g_n)<\delta_n \}.
\]
A subset of $R$ is called \emph{open} in $R$ if it is the (possibly infinite) union of basic open sets. A subset $A$ of $R$ is called \emph{closed} in $R$ if $R\setminus A$ is open in $R$.
The \emph{closure} $\overline A$ of $A$ in $R$ is the smallest closed subset of $R$ containing $A$ \footnote{The closure notation with the overline was first introduced in \cite{BLMW:11}. The earlier work \cite{bue:04} used an underline, which is natural when thinking in terms of polynomially bounded border complexity measures.
}.
Since $\VNP \subseteq R$, we have that $\VF$, $\VBP$, $\VP$, and $\VNP$ are topological spaces using the subspace topology
\footnote{
The box topology construction also works with other topologies on~$P$, so that in principle it can be used over finite fields as well.
}.

In so-called sequential topological spaces the closure $\overline A$ can be described as the set of all limit points of sequences in $A$. But $R$ is not a sequential space (and hence not first-countable). So we have to use the more general notion of convergence of \emph{nets} instead of sequences.
We now recall the standard definition of net convergence and give a natural example in our setting.
A \emph{downward-directed set} $\mathscr D$ is a nonempty set with a reflexive and transitive binary relation~$\preceq$ (i.e., a preorder) with the additional property that
for all
$\varepsilon'\in \mathscr D$ and $\varepsilon''\in \mathscr D$ there exists $\varepsilon''' \in \mathscr D$ with $\varepsilon''' \preceq \varepsilon'$ and $\varepsilon'''\preceq \varepsilon''$.
\footnote{One important example that we will use often is the set of sequences of positive real numbers $\mathscr D = D$ with elementwise comparison. Usually in the definition of nets the sets are 
\emph{upward-directed} instead of downward-directed, but downward-directed sets fit better with our intuition of $\varepsilon$ getting smaller and smaller.}
For a topological space $X$ a \emph{net} $((g))$ in $X$ is a function $\mathscr D \to X$, where $\mathscr D$ is a downward-directed set.
We write $(g_\varepsilon)$ for its evaluation at $\varepsilon \in \mathscr D$.
We say a net $((g))$ \emph{converges} to a point $(f)\in X$ if for every open set $U$ containing $(f)$ there exists $\varepsilon \in \mathscr D$ such that for all $\varepsilon'\preceq \varepsilon$ we have $(g_{\varepsilon'}) \in U$. Note that it is equivalent to require this only for basic open sets $U$.
If $((g))$ converges to $(f)$, then we call $(f)$ a \emph{limit point} of $((g))$.
The key property of nets is that 
for every topological space $X$ and every subset $A \subseteq X$ we have that
$(f) \in \overline{A}$ if and only if there exists a net in $A$ that converges to $(f)$, see e.g.\ \cite[Ch.~2, 2 Theorem (b)]{kelley2017general}. \footnote{Note that if we change the definition of downward-directed sets to \emph{upward-directed} and let $\mathscr D$ be $\mathbb{N}$, then we recover the usual definition of sequences.}

\begin{example*}
For an illustrative example on how nets are used, fix the ground field to be $\mathbb C$ and recall that the \emph{Waring rank} of a homogeneous degree $n$ polynomial $f$ is defined as the smallest number $r$ such that homogeneous linear polynomials $\ell_i$ exist, $1\leq i \leq r$, with $f=\sum_{i=1}^r \ell_i^n$.
For example, the Waring rank of $x_1^2 x_2$ is at most 3, because $6 x_1^2 x_2 = (x_1+x_2)^3-(x_1-x_2)^3-(x_2)^3$.
Let $\VWaring_2$ denote the set of all sequences of $\pp$-families of homogeneous polynomials of Waring rank at most 2. Let $(f) \in R$ be the $\pp$-family of homogeneous degree $n$ polynomials defined via $f_n \coloneqq x_1^{n-1} x_2$.
Recall the downward-directed set $D \coloneqq (\mathbb R_{>0})^{\mathbb N}$.
For every $n\in \mathbb N$, $(\varepsilon) \in D$, we define
$g_{n,(\varepsilon)} \coloneqq \frac{1}{n\varepsilon_{n}}(x_1+ \varepsilon_n x_2)^n-\frac{1}{n\varepsilon_{n}}x_1^n$.
Clearly, for every fixed $(\varepsilon) \in D$ we have $(g_{(\varepsilon) }) \coloneqq (g_{n,(\varepsilon) })_{n \in \mathbb N} \in \VWaring_2$, so $((g))$ is a net in $\VWaring_2$.
To show that $(f) \in \overline{\VWaring_2}$ it remains to show that the net $((g))$ converges to $(f)$. This can be seen as follows.
For every $(\delta)\in D$ we can choose sufficiently small $(\varepsilon) \in D$ such that $(g_{(\varepsilon)}) \in B_{(f),(\delta)}$. Since for all $\varepsilon' \preceq \varepsilon$ and all $n$ we have $\rho(g_{n,(\varepsilon')},f_n) \leq \rho(g_{n,(\varepsilon)},f_n)$, it follows that 
for all $\varepsilon' \preceq \varepsilon$ we have $(g_{(\varepsilon')}) \in B_{(f),(\delta)}$.
Hence $((g))$ converges to $(f)$.
In the rest of this paper we will always use this simple form of the convergence of nets.
\hfill$\bullet$
\end{example*}

A subset $A\subseteq X$ is defined to be \emph{dense in $X$} if $\overline{A}=X$, i.e., every point in $X$ is a limit of a converging net of elements in $A$.
For all the sets $\VF$, $\VBP$, $\VP$, $\VNP$, it is a major open question if they are equal to their closure, see \cite{grochow_et_al:LIPIcs:2016:6314}.
In this paper we restrict our attention to the subspace topology on $\VNP \subseteq R$.

The notion of border projections is closely related to the concept of closures, but it does not require the use of nets.
A polynomial $f$ is called a \emph{border projection} of a polynomial $g$, if there exists a sequence $(g_1, g_2, \ldots)$ with $g_i \leq g$ for all $i$ and $\lim_{m \to \infty}g_m = f$. Then we write $f \trianglelefteq g$. Clearly $f \leq g$ implies $f \trianglelefteq g$.
We extend this definition to sequences of polynomials as follows. A sequence of polynomials $(f)$ is defined to be a \emph{border \pp-projection} of another sequence of polynomials $(g)$, denoted by $(f) \trianglelefteq_{\pp} (g)$, if there is a polynomially bounded function $t : \mathbb{N} \to \mathbb{N}$ such that $f_n \trianglelefteq g_{t(n)}$ for all $n$. \footnote{We remark that in the literature the class $\overline{\VBP}$ is usually defined via: $(f) \in \overline{\VBP}$ if and only if $(f) \trianglelefteq_\pp (\det)$. We prove in Section~\ref{sec:boxtopology} that this coincides with our definition of $\overline{\VBP}$. Analogously for $\VF$, $\VP$, $\VNP$ and their respective complete polynomials.
}

The border oracle complexity $\overline{L^{g}}(f)$ of a polynomial $f$ with oracle access to $g$ is defined as the smallest $r$ such that a sequence of polynomials $(g_1,g_2,\ldots)$ exists with $L^g(g_m)\leq r$ for all $m$ and $\lim_{m\to\infty}g_m = f$. Clearly $\overline{L^{g}}(f)\leq L^g(f)$ for all $f$ and $g$.
Consider two $\pp$-families of polynomials, $(f)$ and $(g)$. The $\pp$-family $(f)$ is said to be a border $\cc$-reduction of $(g)$, denoted by $(f) \trianglelefteq_{\cc} (g)$, if there exists a polynomially bounded function $t : \mathbb{N} \to \mathbb{N}$ such that $\overline{L^{g_{t(n)}}}(f_{n})$ is polynomially bounded.

A $\pp$-family $(f)$ is defined to be $\VNP$-border-$\pp$-complete if $(f)\in\VNP$ and $(\HC) \trianglelefteq_\pp (f)$.
The set of all $\VNP$-border-$\pp$-complete $\pp$-families is denoted by $\VNPC(\trianglelefteq_\pp)$.
Analogously, a $\pp$-family $(f)$ is defined to be $\VNP$-border-$\cc$-complete if $(f)\in\VNP$ and $(\HC) \trianglelefteq_\cc (f)$.
The set of all $\VNP$-border-$\cc$-complete $\pp$-families is denoted by \mbox{$\VNPC(\trianglelefteq_\cc)$}.

In geometric complexity theory the commonly used type of reduction is similar to $\pp$-projections and border-$\pp$-projections, but instead of replacing variables with constants and variables, variables are replaced by affine linear polynomials \cite{BLMW:11}. All proofs in this paper also work with this version of $\pp$-projections and border-$\pp$-projections.

\section{Main results}
Using a fairly elementary proof we obtain the following surprising density results.
\begin{theorem}\label{thm:density}
Let $\leq$ be any of $\leq_\pp$, $\leq_\cc$, $\trianglelefteq_\pp$, or $\trianglelefteq_\cc$.
Let $\lesssim$ be any of $\leq_\pp$, $\trianglelefteq_\pp$. Then:
\begin{itemize}
\item the set $\VNPC(\leq)$ is a dense subset of $\VNP$ and the complement $\VNP \setminus \VNPC(\lesssim)$ is a dense subset of~$\VNP$.
\end{itemize}
Analogously,
\begin{itemize}
\item $\VFC(\leq)$ and $\VF \setminus \VFC(\lesssim)$ are both dense in $\VF$,
\item $\VBPC(\leq)$ and $\VBP \setminus \VBPC(\lesssim)$ are both dense in $\VBP$,
\item $\VPC(\leq)$ and $\VP \setminus \VPC(\lesssim)$ are both dense in $\VP$.
\end{itemize}
\end{theorem}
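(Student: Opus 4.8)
The plan is to prove every clause of the theorem by an explicit $\varepsilon$-perturbation. Note first that $\leq_\pp$ is the strongest of the four reductions: $f\leq g$ implies $f\trianglelefteq g$, and a $\pp$-projection is computed by a single oracle gate, so $\leq_\pp$ also implies $\leq_\cc$ and hence $\trianglelefteq_\cc$. Therefore, for each ``completeness is dense'' statement it suffices to exhibit, for every $(f)$ in the relevant class, a perturbation witnessing membership in $\overline{\VNPC(\leq_\pp)}$ (resp.\ $\overline{\VFC(\leq_\pp)}$, etc.), and this automatically witnesses density for $\leq_\cc$, $\trianglelefteq_\pp$ and $\trianglelefteq_\cc$ as well. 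The ``complement is dense'' statements are only claimed for $\leq_\pp$, which is the case treated.

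\emph{Density of $\VNPC(\leq)$ in $\VNP$.} Given $(f)\in\VNP$, set $g_n:=\HC_n(y^{(n)})$ where $y^{(n)}$ is a fresh copy of the $n^2$ variables of $\HC_n$; then $(f)+\varepsilon(g)=(f_n+\varepsilon\,\HC_n(y^{(n)}))$ lies in $\VNP(\mathbb F(\varepsilon))$ since $\VNP$ is closed under sums and under multiplication by field constants. Substituting $0$ for every variable of $f_n$, $\varepsilon^{-1}z_{1,j}$ for the first-row variables $y^{(n)}_{1,j}$, and $z_{i,j}$ for the remaining $y^{(n)}_{i,j}$, and using that each monomial of $\HC_n$ contains exactly one first-row variable, we send $f_n+\varepsilon\,\HC_n(y^{(n)})$ to $\varepsilon\cdot\varepsilon^{-1}\HC_n(z)=\HC_n(z)$. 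Thus $\HC_n\leq(f_n+\varepsilon g_n)$ over $\mathbb F(\varepsilon)$ with the identity stretching function, i.e.\ $(\HC)\leq_\pp(f)+\varepsilon(g)$ and $(f)\in\overline{\VNPC(\leq_\pp)}$. For $\VF$, $\VBP$, $\VP$ one runs the identical argument with $\HC$ replaced by the respective complete family $(p)$, which may be taken homogeneous of polynomially bounded degree $d_n$: take $g_n:=\varepsilon^{\,d_n-1}p_n(y^{(n)})$ and substitute every $y^{(n)}$-variable by $\varepsilon^{-1}z$, which scales $p_n$ by $\varepsilon^{-d_n}$ and hence sends $f_n+\varepsilon^{\,d_n}p_n(y^{(n)})$ to $p_n(z)$.

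\emph{Density of $\VNP\setminus\VNPC(\leq_\pp)$ in $\VNP$.} The key is the multiplicative perturbation $(f)\mapsto(f)+\varepsilon(f^2)=(f_n(1+\varepsilon f_n))$. It lies in $\VNP(\mathbb F(\varepsilon))$ since $\VNP$ is closed under products and sums, and its $\varepsilon$-free part is $(f)$, so $g_n:=f_n^2\in\mathbb F[\varepsilon][x]$ is a legitimate perturbing family. For any level $k$ and any projection $\alpha$ over $\mathbb F(\varepsilon)$ we have $(f_k+\varepsilon f_k^2)(\alpha)=q(1+\varepsilon q)$ with $q:=f_k(\alpha)$, so it suffices to show that $\HC_n$ is not of the form $q(1+\varepsilon q)$ for any polynomial $q$ when $n\geq 3$. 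If $q$ is a constant this is clear. If $\deg q=d\geq 1$, the leading form of $q+\varepsilon q^2$ is $\varepsilon$ times the square of the leading form of $q$, hence nonzero, so $\deg(q+\varepsilon q^2)=2d$ and therefore $n=2d$; then $\varepsilon q^2=\HC_n-q$ forces $q\mid\HC_n$ in $\mathbb F(\varepsilon)[z]$, but $\HC_n$ is irreducible over $\mathbb F(\varepsilon)$ (it is classically irreducible over $\mathbb F$ for $n\geq 3$, and irreducibility persists in the purely transcendental extension $\mathbb F(\varepsilon)$) while $\deg q=n/2<n$, a contradiction. Hence $\HC_n\not\leq(f_k+\varepsilon f_k^2)$ for every $k$ and every $n\geq 3$, so $(\HC)\not\leq_\pp(f)+\varepsilon(f^2)$ and $(f)\in\overline{\VNP\setminus\VNPC(\leq_\pp)}$. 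For $\VF$, $\VBP$, $\VP$ the same perturbation works: it is enough to exhibit a single family $(q)$ in the class that is \emph{not} a $\pp$-projection of $(f)+\varepsilon(f^2)$, and one may take $q_n:=x_1\cdots x_n+w_n$ with $w_n$ a fresh variable, which lies in $\VF\subseteq\VBP\subseteq\VP$, has unbounded degree, and is irreducible (being linear in $w_n$ with unit leading coefficient), so the degree argument applies verbatim with $q_n$ in place of $\HC_n$.

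\emph{Main obstacle.} The step I expect to require the real idea is the second one: perturbing a family that is \emph{itself} complete into one that provably is not. The obvious attempts---adding $\varepsilon$ times a fresh variable, or $\varepsilon$ times a monomial---all fail, since a projection may set the new variable to $0$ and thereby recover $f_n$ together with its hardness. The multiplicative perturbation $f_n(1+\varepsilon f_n)$ circumvents this because a projection of a product is again a product of the projected factors, and a product $q(1+\varepsilon q)$ with $q$ nonconstant can never equal the irreducible polynomial $\HC_n$, for degree reasons. The remaining points are routine: closure of $\VF$, $\VBP$, $\VP$, $\VNP$ under sums, products and scalar multiples; the classical irreducibility of $\HC_n$ (needed only for the $\VNP$ statement); and the elementary substitutions computed above.
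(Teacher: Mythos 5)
Your construction for the density of $\VNPC(\leq_\pp)$ has two genuine problems, and the paper's choice of perturbation sidesteps both with one stroke.

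First, after substituting $0$ for every variable of $f_n$, you do not get $0$ but the constant $f_n(\mathbf 0)$, so your projection produces $f_n(\mathbf 0)+\HC_n(z)$ rather than $\HC_n(z)$. Over $\mathbb C$ you could repair this by mapping the variables of $f_n$ to a zero of $f_n$, but over $\mathbb Q$ or $\mathbb R$ (or a general field $\mathbb F$, as the theorem is stated) such a zero need not exist, and a plain projection cannot subtract a constant. Second, the substitution $y^{(n)}_{1,j}\mapsto \varepsilon^{-1}z_{1,j}$ is a linear substitution, not a $\pp$-projection: in the paper's definition each image $\alpha_i$ must be a single variable or a single field element, not a scalar multiple of a variable. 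So as written you only prove the statement for the GCT-style reduction by affine linear maps, not for $\leq_\pp$. The paper's perturbation $h_n := f_n + \varepsilon\, y\,(\HC_n - f_n)$ with the single substitution $y\mapsto 1/\varepsilon$ avoids both issues: the $-f_n$ term cancels $f_n$ exactly (no assumption on $f_n(\mathbf 0)$), and replacing the one fresh variable $y$ by the constant $1/\varepsilon\in\mathbb F(\varepsilon)$ is a legitimate projection. You should replace your additive perturbation by the paper's, or at minimum multiply by a fresh variable $y$ and include $-f_n$ in $g_n$, so that $g_n := y(\HC_n(y^{(n)})-f_n)$ does the job with a plain projection.

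The second half of your argument --- density of $\VNP\setminus\VNPC(\leq_\pp)$ --- is correct and even uses the same perturbation $f_n+\varepsilon f_n^2$ as the paper, but your reason why $\HC_n\not\leq f_k+\varepsilon f_k^2$ is genuinely different. The paper invokes a degree-parity observation (its Proposition~\ref{prop:IM18}, from~\cite{IM:18}): a univariate constant family of odd degree cannot be a projection of $g_n+\varepsilon g_n^2$, because a projection either has $y$-degree $0$ or has even $y$-degree. Your argument instead uses irreducibility of $\HC_n$: if $\HC_n=q+\varepsilon q^2$ with $\deg q\geq 1$ then $q\mid\HC_n$ with $\deg q=n/2<n$, contradicting irreducibility. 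Both are valid. The paper's version is more elementary (it needs no facts about $\HC_n$ beyond its degree, and it works verbatim for all four classes because it only argues about a small univariate polynomial); yours requires irreducibility of $\HC_n$ over $\mathbb F(\varepsilon)$ (true, and persistent under the purely transcendental extension via Gauss's lemma, but a stronger input) and a bespoke irreducible witness $x_1\cdots x_n+w_n$ for $\VF$, $\VBP$, $\VP$. Your observation that it suffices to exhibit any family $(q)$ in the class with $(q)\not\leq_\pp (h)$ is the right reduction, and it mirrors how the paper passes from the constant family $(s)$ to $(\HC)$.
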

The multitude of density statements in Theorem~\ref{thm:density} suggests that the topology on $R$ is suprisingly coarse (i.e., has few open sets).
We leave it as an open problem if $\VNP \setminus \VNPC(\leq)$ is dense in $\VNP$ for the other two reduction notions $\leq_\cc$ and~$\trianglelefteq_\cc$.
Oracle complexity is too coarse to study $\VF$-completeness, $\VBP$-completeness, or $\VP$-completeness, because a polynomially sized circuit does not gain any benefit from having oracle access to a polynomially sized formula or a polynomially sized determinant, or another polynomially sized circuit.

As a second result we initiate the comparison of classical and border reduction notions.
We give an almost complete separation of the sets of $\VNP$-complete polynomials under the different reduction notions as follows.
\begin{theorem}\label{thm:reductions}We have the following diagram of inclusions (solid arrows) and non-inclusions (dashed arrows).
\begin{center}
\begin{tikzpicture}
\node (pp) at (0,0) {$\VNPC(\leq_\pp)$};
\node (cc) at (3,-1) {$\VNPC(\leq_\cc)$};
\node (tripp) at (3,1) {$\VNPC(\trianglelefteq_\pp)$};
\node (tricc) at (6,0) {$\VNPC(\trianglelefteq_\cc)$};
\draw[-Triangle] (pp.east)+(0,-0.1) to [bend left=10] (cc);
\draw[dashed, -Triangle] (cc) to [bend left=10] node[midway,below] {\tiny\cite{IM:18}} (pp);
\draw[-Triangle] (cc) to [bend right=10] (tricc);
\draw[dashed, -Triangle] (tricc) to [bend right=10] node[midway,right]{\tiny Cor.~\ref{cor:HCsquareccomplete} and Lem.~\ref{2}} (tripp);
\draw[-Triangle] (tripp) to [bend right=10] (tricc);
\draw[-Triangle] (pp) to [bend right=10] (tripp);
\draw[dashed, -Triangle] (tripp) to [bend right=10] node[midway,above left] {\tiny Lem.~\ref{lem:IM18} and Lem.~\ref{lem3}} (pp);
\draw[dashed, -Triangle] (cc) to node[midway, fill=white] {\tiny Cor.~\ref{cor:HCsquareccomplete} and Lem.~\ref{2}} (tripp);
\end{tikzpicture}
\end{center}
Analogously for $\VF$, $\VBP$, and $\VP$ instead of $\VNP$.
\end{theorem}
The inclusions (solid arrows) are obvious consequences of the relations among $\leq_\pp$, $\leq_\cc$ etc. The non-inclusions are proved in the respective lemmas as annotated in the figure. We only treat the case $\VNP$, as the other classes are handled in exactly the same way, with replacing $(\HC)$ by the respective complete $\pp$-family. One caveat: Lemma~\ref{lem:roots} does not work out of the box for $\VF$. We treat this case in Section~\ref{sec:formulas}.

\section{Related work}
The relative power of reductions in the Boolean world has been analyzed for example in \cite{OH:02, HP:07}. The notion of $\cc$-reductions in algebraic complexity theory is relatively new \cite{burgisser1999structure}. The difference between $\pp$-projections and $\cc$-reductions plays a prominent role in \cite{mahajan2018some}.
The relative power of algebraic reduction notions has been studied before:
\cite{IM:18} show with a short argument that \[\VNPC(\leq_\pp) \subsetneqq \VNPC(\leq_\cc).\]
They do not study border complexity though.

Border complexity has already been an object of study in algebraic complexity theory for bilinear maps since 1980 (see \cite{Bini:80}) and is still a very active area of research today \cite{lo:15, lm:18}.
The study of border complexity for polynomials has recently gained significant momentum, see for example \cite{grochow_et_al:LIPIcs:2016:6314, Kum:18, DBLP:journals/jacm/BringmannIZ18,  blser_et_al:LIPIcs:2020:12573}.
In fact, \cite{DBLP:journals/jacm/BringmannIZ18} prove that $\text{VFC}(\leq_\pp) \subsetneqq \text{VFC}(\trianglelefteq_\pp)$.
This result makes use of a result on width-2 ABPs that is obtained from a fairly involved analysis \cite{aw:16}.

\section{Proof of Theorem~\ref{thm:density}}
We start with an observation of \cite{IM:18} that we state simultaneously for $\VNP$, $\VF$, $\VBP$, and~$\VP$. It is clear that the result holds in much higher generality.

\begin{proposition}\label{prop:IM18}
Suppose $(f) \in \VNP$ \textup{[}$\VF$, $\VBP$, $\VP$\textup{]} and each $f_n$ has the following form:
$
f_{n} = q (rg_{n} + c_n g_{n}^{2})
$
for some $\pp$-family $(g)$, some fixed polynomial $r$ of any degree $d_2$, some fixed polynomial $q$ of even degree $d_1$ that is also a perfect square, and some sequence $(c)$ of nonzero constants.
Then $(f) \notin \VNPC(\leq_\pp)$ \textup{[}$\VFC(\leq_\pp)$, $\VBPC(\leq_\pp)$, $\VPC(\leq_\pp)$\textup{]}.
\end{proposition}
\begin{proof}
The proof is a minor generalization of \cite[Lemma~3.2]{IM:18}. Consider a univariate polynomial $s(y)$ of odd degree $M > d_{1} + 2d_{2}$.
We claim that $s$ cannot be written as a projection of $f_{n}$, for any $n$.
Let $\deg_{y}(h)$ denote the degree of a polynomial $h$ in the variable $y$, when considered as a polynomial over the polynomial ring in all its other constituent variables.
Let $\gamma$ be any linear projection map in the sense of $\leq_\pp$. Then, $\deg_{y}(\gamma(f_{n})) \leq \textsf{max$[\deg_{y}(\gamma(q\cdot r\cdot g_{n})), \deg_{y}(\gamma(q\cdot g_{n}^{2}))]$}$. Also, note that $\deg_{y}(\gamma(q)) \leq d_{1}$ and $\deg_{y}(\gamma(r)) \leq d_{2}$.

If $\deg_{y}(\gamma(g_{n})) \leq d_{2}$, then $\deg_{y}(\gamma(f_{n})) \leq d_{1} +2d_{2} $. If $\gamma(f_{n}) = s(y)$, this contradicts the fact that $s(y)$ has degree~$M$. 

Otherwise, $\deg_{y}(\gamma(g_{n})) > d_{2}$ and hence
$\deg_{y}(\gamma(f_{n})) = \deg_{y}(\gamma(q)) + \max\{\deg_{y}(\gamma(rg_n)),\deg_{y}(\gamma(c_n g_n^2))\} = \deg_{y}(\gamma(q\cdot g_{n}^{2}))$, because $\deg_{y}(\gamma(r)) \leq d_{2}$. But $q\cdot g_{n}^{2}$ is a perfect square polynomial, hence $\deg_y(\gamma(f_n))$ is even, but $s(y)$ has odd degree. Hence, $\gamma(f_{n}) \neq s(y)$.

Thus, the constant sequence $(s)$ cannot be written as a $\textsf{p}$-projection of $f$. Hence, $(f) \notin \VNPC(\leq_\pp)$ [$\VFC(\leq_\pp)$, $\VBPC(\leq_\pp)$, $\VPC(\leq_\pp)$].
\end{proof}

In a simpler setting the above ideas work also for border reductions, as the following proposition shows.
\begin{proposition}[border reduction version of {Prop.~\ref{prop:IM18}}]\label{prop:bordernew}
Let $g$ be a non-constant polynomial with $g \trianglelefteq f + c f^2$ for a nonzero field constant~$c$. Then $\deg(g)$ is even.
\end{proposition}
\begin{proof}
A classical theorem by Lehmkuhl and Lickteig \cite{LL:89} over $\mathbb F \in \{\mathbb C, \mathbb R\}$ (published for order 3 tensors, but frequently its generalization to arbitrary polynomials is used, see e.g.\ \cite{bue:04,forbes2018pspace,guo2018algebraic,sinhababu2019power,huttenhain2017geometric,medini2021hitting}) implies that the approximation of $f$ can be written in terms of Laurent polynomials:
$f' \coloneqq (f + c f^2)(\overline{\alpha})$,
\[
    \lim_{\varepsilon \to 0}\big( f'(\alpha_{1}, \alpha_{2}, \ldots , \alpha_{N})\big) = g
\]
for some $\alpha_{i}$ $\in$ $\mathbb{F}[\varepsilon,\varepsilon^{-1}] \cup \overline{\mathbf{x}}$.
If all homogeneous components of $f(\overline{\alpha})$ converge for $\varepsilon\to 0$, then $\lim_{\varepsilon\to 0}(f') = \lim_{\varepsilon\to 0}f(\overline{\alpha}) + c (\lim_{\varepsilon\to 0}f(\overline{\alpha}))^2$.
So the top homogeneous part of $\lim_{\varepsilon\to 0}(f')$ is (up to rescaling with $c$) the square of the top homogeneous part of $\lim_{\varepsilon\to 0}(f(\overline{\alpha}))$. Hence in this case the degree is even.

We now treat the case where there exists a homogeneous component of $f(\overline{\alpha})$ that diverges for $\varepsilon\to 0$.
Let $d$ be the largest degree of such a component.
Assume for now that $d\neq 0$.
Then, the homogeneous degree $2d$ component of $f^2(\overline{\alpha})$ must diverge and this cannot be cancelled out by the other homogeneous components which are of lower degree. This is a contradiction to the convergence of the sum. Hence, $d = 0$.

Now assume that the homogeneous degree 0 component $f(\overline{\alpha})_0$ of $f(\overline{\alpha})$ diverges for $\varepsilon\to 0$ and let $M<0$ be the smallest exponent of $\varepsilon$ in the univariate Laurent polynomial $f(\overline{\alpha})_0$.
This implies that the homogeneous degree 0 component of $f'$ is a Laurent polynomial in $\varepsilon$ that has a nonzero term of degree $2M$. Hence $f^{\prime}$ diverges, which is a contradiction to our initial assumption that $\lim_{\varepsilon\to 0}f^{\prime} = g$.
\end{proof}

\begin{proof}[Proof of Theorem \ref{thm:density}]
First we prove that $\VNPC(\leq_\pp)$ is dense in $\VNP$.
Let $\left(f\right) \in \VNP$ be arbitrary.
Recall the downward directed set $D \coloneqq (\mathbb R_{>0})^{\mathbb N}$.
For every $n\in \mathbb N$, $(\varepsilon) \in D$, we define
\begin{equation*}
    g_{n,(\varepsilon)} \coloneqq f_{n} + \varepsilon_n y_n\left(\mathrm{HC}_{n} - f_{n}\right),
\end{equation*}
where each $y_n$ is a variable that is unused by $f_{n}$ and $\mathrm{HC}_{n}$ for all $n$.
For every fixed $(\varepsilon) \in D$ we have $(g_{(\varepsilon) }) \in \VNP$, because $\VNP$ is closed under taking finitely many sums and products.
Moreover, we have $(g_{(\varepsilon)}) \in \VNPC(\leq_\pp)$, which can be seen as follows.
Indeed, $(\HC) \leq_\pp (g_{(\varepsilon)})$, because under the projection map $\gamma: y_n \mapsto \frac{1}{\varepsilon_n}$ we have $\gamma(g_{n,(\varepsilon)})=\HC_n$ and thus $\gamma(\, (g_{(\varepsilon)})\,) = (\HC)$.
Hence $((g))$ is a net in $\VNPC(\leq_\pp)$.

The net $((g))$ converges to $(f)$, which can be seen in the same way as in Example~($\ast$):
For every $(\delta)\in D$ we can choose sufficiently small $(\varepsilon) \in D$ such that $(g_{(\varepsilon)}) \in B_{(f),(\delta)}$. Since for all $\varepsilon' \preceq \varepsilon$ and all $n$ we have $\rho(g_{n,(\varepsilon')},f_n) \leq \rho(g_{n,(\varepsilon)},f_n)$, it follows that 
for all $\varepsilon' \preceq \varepsilon$ we have $(g_{(\varepsilon')}) \in B_{(f),(\delta)}$.

The result for the other reduction types is immediate, because
$\textsf{p}$-projections are the weakest notion of reduction we consider, in particular
$\VNPC(\leq_\pp) \subseteq \VNPC(\leq_\cc)$,
$\VNPC\nolinebreak(\nolinebreak\leq_\pp\nolinebreak)\nolinebreak \subseteq \VNPC\nolinebreak(\nolinebreak\trianglelefteq_\pp\nolinebreak)$,
and $\VNPC(\leq_\pp) \subseteq \VNPC(\trianglelefteq_\cc)$.

For the other part, let $(f)\in\VNP$ be arbitrary.
For every $n\in \mathbb N$, $(\varepsilon)\in D$, we define 
\begin{equation*}
    h_{n,(\varepsilon)} \coloneqq f_{n} + \varepsilon_n f_{n}^{2}
\end{equation*}
Obviously $(h_{(\varepsilon)}) \in \VNP$, but
according to Proposition~\ref{prop:bordernew} we have
that the constant $\pp$-family of the degree 1 polynomial $(x)$
has $(x) \not\trianglelefteq_\pp \left(h_{(\varepsilon)}\right)$ and hence 
$\left(h_{(\varepsilon)}\right) \notin \mbox{$\VNPC(\trianglelefteq_\pp)$}$.
Hence $((h))$ is a net in $\VNP \setminus \VNPC(\trianglelefteq_\pp)$.
Clearly $((h))$ converges to $(f)$ with the same argument as for $((g))$ in the first part of this proof.
Thus,
$\VNP \setminus \VNPC(\trianglelefteq_\pp)$ is dense in $\VNP$.

The analogous statements about $\VF$, $\VBP$, and $\VP$
that are claimed in the theorem are proved in exactly the same way by replacing $\HC_n$ by a complete polynomial family for the respective class.
\end{proof}

\section{Proof of Theorem~\ref{thm:reductions}}
We start with a classical lemma about taking roots.
\begin{lemma}\label{lem:roots}
Suppose $g = f^r$ for some $f \in \mathbb{F}[\overline{\mathbf{x}}]$ of degree $d$ and constant $r$, where $\overline{\mathbf{x}}$ denotes a set of variables. Then $f$ can be computed by a $g$-oracle circuit of size $poly(d)$. 
In particular, if $g_n=f_n^r$ for all $n$, then
$(f) \leq_\cc (g)$.
\end{lemma}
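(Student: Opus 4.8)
The plan is to recover $f$ from black-box access to $g=f^r$ by restricting to a generic affine line, extracting the univariate-in-$t$ coefficients by interpolation, and then performing power-series $r$-th root extraction along that line. First observe that the ``in particular'' part follows at once from the single-polynomial statement: if $(f)$ is a $\pp$-family then $d_n:=\deg f_n$ is polynomially bounded, so $L^{g_n}(f_n)=\mathrm{poly}(d_n)=\mathrm{poly}(n)$ and the identity function works as the reduction. So the task is to build, given oracle access to $g=f^r$ with $\deg f=d$ and $r$ a fixed positive integer, a $g$-oracle circuit of size $\mathrm{poly}(d)$ computing $f$. We may assume $f\neq 0$ (otherwise $g=0$ and $f$ is computed by the empty circuit) and $r\geq 2$ (for $r=1$ a single oracle gate suffices). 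We may also assume $\mathbb F$ is infinite, which is the regime relevant here since Theorem~\ref{thm:reductions} is stated over $\mathbb Q,\mathbb R,\mathbb C$; then we can fix a point $a=(a_1,\dots,a_n)\in\mathbb F^n$ with $f(a)\neq 0$, set $F(t,\overline{\mathbf x}):=f(tx_1+a_1,\dots,tx_n+a_n)$, a polynomial of $t$-degree at most $d$ with $F(0,\overline{\mathbf x})=f(a)$, and note that $G(t,\overline{\mathbf x}):=g(tx_1+a_1,\dots,tx_n+a_n)=F(t,\overline{\mathbf x})^r$ is computed by feeding the affine forms $tx_i+a_i$ into a single $g$-oracle gate.

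Since $\deg_t G\leq rd$, I would evaluate $t$ at $rd+1$ distinct field elements (that is, make $rd+1$ such oracle calls) and apply Lagrange interpolation, whose coefficients are field constants, to obtain wires carrying the polynomials $G_0(\overline{\mathbf x}),\dots,G_d(\overline{\mathbf x})$ with $G=\sum_i G_i t^i$; this costs $\mathrm{poly}(d)$ arithmetic gates, and $G_0=f(a)^r$ is a nonzero field constant (it does not depend on $\overline{\mathbf x}$ since it equals $G(0,\overline{\mathbf x})$). Now comes the core step: extract $F$ from the relation $G=F^r$ read modulo $t^{d+1}$. Normalize $u:=f(a)^{-r}G$ and $h:=f(a)^{-1}F$, both of which have constant term $1$ in $t$ and satisfy $h^r=u$; writing $h=1+\sum_{m\geq 1}h_m t^m$ and comparing the coefficient of $t^m$ on both sides yields $r\,h_m+P_m(h_1,\dots,h_{m-1})=u_m$, where $P_m$ is a sum of at most $m^{r-1}$ products of $r$ previously computed coefficients. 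Since $r$ is a nonzero constant, invertible because $\mathrm{char}\,\mathbb F=0$ (more generally it suffices that $\mathrm{char}\,\mathbb F\nmid r$), we solve $h_m=r^{-1}(u_m-P_m)$ using $\mathrm{poly}(m)$ gates, and iterating over $m=1,\dots,d$ costs $\mathrm{poly}(d)$ gates total. Because $F$ genuinely has $t$-degree $\leq d$ and is the unique $r$-th root of $G$ with constant term $f(a)$, the truncated series $f(a)\cdot\bigl(1+\sum_{m=1}^{d}h_m t^m\bigr)$ equals $F$ exactly, so multiplying the $h_m$ by the field constant $f(a)$ recovers the coefficients $F_0,\dots,F_d$ of $F$.

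Finally, substituting $t:=1$ gives $f(x_1+a_1,\dots,x_n+a_n)=\sum_{i=0}^{d}F_i(\overline{\mathbf x})$, and substituting $x_\ell\mapsto x_\ell-a_\ell$ for every $\ell$ recovers $f(\overline{\mathbf x})$. The resulting $g$-oracle circuit uses $O(rd)$ oracle gates and $\mathrm{poly}(d)$ arithmetic gates, hence has size $\mathrm{poly}(d)$, as required. I expect the main obstacle to be conceptual rather than computational: the oracle exposes only the perfect power $g=f^r$, so one cannot form ``$\sqrt[r]{g}$'' directly, and the real content is the reduction to a one-parameter family via the generic line, the observation that the resulting univariate-in-$t$ root problem is solvable in bounded size, and the (routine but essential) bookkeeping that truncation at $t$-degree $d$ is lossless because $F$ is honestly a polynomial of that degree. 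For completeness one should also note the small-field case, where $f$ may have no non-root in $\mathbb F^n$; this is handled by a harmless finite field extension and is irrelevant to the applications in this paper.
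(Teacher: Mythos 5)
Your proof is correct and rests on the same two pillars as the paper's: shift to a point where $f$ is nonzero so that root extraction can begin from a nonzero constant term, then recover the $r$-th root as a power series truncated at degree $d$, which is lossless because $\deg f = d$. The realizations differ only in bookkeeping. The paper normalizes $g(\mathbf 0)=1$ and writes $f = \sum_{i=0}^{d} \binom{1/r}{i}(g-1)^i \bmod \overline{\mathbf x}^{d+1}$, invoking the generalized binomial theorem and Strassen's homogenization trick to enforce the truncation. You restrict along the line $t \mapsto a + t\,\overline{\mathbf x}$, Lagrange-interpolate the $t$-coefficients of $G = g(a+t\overline{\mathbf x})$ from $rd+1$ evaluations, and recursively solve $r\,h_m = u_m - P_m(h_1,\dots,h_{m-1})$ for the coefficients of the normalized root. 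These are two dressings of the same computation: after the paper's shift to $a=\mathbf 0$, its homogenization variable plays exactly the role of your $t$, so the two interpolation steps coincide, and the binomial coefficients $\binom{1/r}{i}$ are precisely the closed-form solution of your recursion. Your presentation has the minor advantages of avoiding the generalized binomial theorem as a black box and of making the degree bookkeeping explicit (you need $rd+1$ evaluation points to interpolate $G$ but only use the first $d+1$ of its coefficients), while the paper's version is more compact; both correctly need $\mathrm{char}\,\mathbb F \nmid r$ (so that $1/r$ exists) and an infinite field, or a finite extension thereof, to find the shift point $a$.
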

\begin{proof}
(The proof follows that of  a special case of \cite{DBLP:conf/stoc/Kaltofen87}, the proof technique borrows from \cite{DBLP:conf/stoc/Valiant79} and \cite{DuttaSaxena}). Consider a polynomial $g = f^{r}$, where $f$ has degree $d$. Notice that for every infinite field $\mathbb{F}$ and every nonzero polynomial $f$ over $\mathbb{F}[\overline{\mathbf{x}}]$, there exists $\mathbb{\overline{\alpha}}$ $\in$ $\mathbb{F}^{\vert \overline{\mathbf{x}}\vert}$, such that $f(\overline{\alpha}) \neq 0$. Also, shifting the variables $f(\overline{\mathbf{x}}) \mapsto f(\overline{\mathbf{x}}+\alpha)$
is an invertible operation
since you may re-shift
at the input nodes of the circuit.  Thus, by appropriately shifting
we may assume w.l.o.g.\ that $g(\mathbf{0}) \neq 0$.
Rescaling at the output node is also an invertible operation, so we may assume w.l.o.g.\ that $g(\mathbf{0}) =1$.
We can write:
\begin{equation*}
    f = (1 + (g-1))^{1/r}.
\end{equation*}
Using the binomial theorem for rational exponents, this gives us:
$\left(1 + (g-1)\right)^{1/r} =$
\begin{equation*}
      1 + \frac{1}{r}(g-1) + {1/r \choose 2}(g-1)^{2} + \cdots + {1/r \choose d}(g-1)^{d} + \cdots
\end{equation*}
Since $g(\mathbf{0}) = 1$, then $g-1 = 0$ mod $(\overline{\mathbf x})$. Thus, $(g-1)^{i}$ has only monomials of degree larger than $d$ for $i \geq d+1$. So, $f = $
\begin{equation*}
     1 + \frac{1}{r}(g-1) + {1/r \choose 2}(g-1)^{2} + \cdots + {1/r \choose d}(g-1)^{d} \hspace{1.5 mm} \text{mod} \!\left(\!\overline{\mathbf{x}}^{d+1}\!\right)
\end{equation*}
where $\overline{\mathbf{x}}^{d+1}$ denotes the set of all monomials of degree $d+1$. We have the oracle circuit for $g$. The modular operation can be done via Strassen's homogenization trick \cite{Strassen1973}. Specifically, each homogeneous part of $f$ can be written as a linear combination of  $(d+1)$ \textsf{p}-projections of~$g$. Thus, computing roots using oracle gates is possible with circuits of size poly$(d)$. This proves the first part.
The second part follows from the fact that $\pp$-families have polynomially bounded degrees.
\end{proof}

As an immediate corollary we obtain:
\begin{corollary}\label{cor:HCsquareccomplete}
$(\HC^2) \in \VNPC(\leq_\cc)$.
\end{corollary}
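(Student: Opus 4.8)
The plan is to derive $(\HC^2) \in \VNPC(\leq_\cc)$ directly from the two facts we already have on the table: Lemma~\ref{lem:roots} (computing roots via oracle gates) and the observation that $\VNP$ is closed under $\cc$-reductions, since $\VNP = \VNPC(\leq_\cc)$ in the sense that $(f) \leq_\cc (\HC)$ characterizes $\VNP$.

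First I would check membership: $(\HC^2) \in \VNP$. Since $(\HC) \in \VNP$ and $\VNP$ is closed under multiplication (it is a $\pp$-family with a polynomial-size $\cc$-circuit, or more simply $\HC_n^2$ is computed by squaring the $\VP$-style hypercube-sum formula for $\HC_n$, and $\VNP$ is closed under taking products of two of its members), we get $(\HC^2) \in \VNP$. One must also confirm $(\HC^2)$ is a genuine $\pp$-family, which is immediate: the number of variables is unchanged and the degree merely doubles.

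Next I would establish hardness: $(\HC) \leq_\cc (\HC^2)$. This is exactly the instance $g_n = f_n^2$ of Lemma~\ref{lem:roots} with $f_n = \HC_n$ and $r = 2$: the lemma gives a $\HC_n^2$-oracle circuit of size $poly(\deg \HC_n) = poly(n)$ computing $\HC_n$, and since $(\HC)$ has polynomially bounded degree, the second part of the lemma yields $(\HC) \leq_\cc (\HC^2)$. Combining membership and hardness gives $(\HC^2) \in \VNPC(\leq_\cc)$.

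There is essentially no obstacle here — this is why the statement is phrased as an "immediate corollary." The only point that deserves a sentence of care is the field hypothesis: Lemma~\ref{lem:roots} was stated using the binomial series with rational coefficients and requires the field to be infinite (indeed of characteristic $0$, or at least not dividing $r=2$), which is why the surrounding results are stated over $\mathbb F \in \{\mathbb Q, \mathbb R, \mathbb C\}$; I would simply inherit that assumption. Everything else is transitivity and the basic closure properties of $\VNP$.
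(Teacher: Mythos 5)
Your proof is correct and follows essentially the same route as the paper: both obtain the hardness direction $(\HC) \leq_\cc (\HC^2)$ by instantiating Lemma~\ref{lem:roots} with $r=2$, $f_n=\HC_n$, $g_n=\HC_n^2$, and then conclude $\cc$-completeness by transitivity. You are in fact slightly more thorough than the paper's one-line proof, which leaves the membership $(\HC^2)\in\VNP$ and the characteristic/infinite-field hypothesis needed for Lemma~\ref{lem:roots} implicit.
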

\begin{proof}
By Lemma~\ref{lem:roots} we have $(\HC) \leq_\cc (\HC^2)$.
Since $(\HC) \in \VNPC(\leq_\cc)$, for every $(f) \in \VNP$ we have
$(f) \leq_\cc (\HC)$.
By transitivity we have $(f) \leq_\cc (\HC^2)$. Therefore
$(\HC^2) \in \VNPC(\leq_\cc\nolinebreak)$.
\end{proof}

\begin{lemma}\label{2}
$(\mathrm{HC}^{2}) \not\in \VNPC(\trianglelefteq_\pp)$.
\end{lemma}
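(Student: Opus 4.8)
The plan is to show that if $(\HC) \trianglelefteq_\pp (\HC^2)$, then some univariate polynomial of odd degree would be a border projection of a polynomial of the form covered by Proposition~\ref{prop:IM18}, contradicting the degree-parity obstruction. The key point is that $\HC^2$ is literally a perfect square, so it fits the template $f_n = q(rg_n + cg_n^2)$ from Proposition~\ref{prop:IM18} in a degenerate way: taking $g_n = 0$, $c = 0$, $q = \HC_n$ (which is a perfect square, namely $(\HC_{n/2}\text{-ish})$\,--- more carefully, $\HC_n^2$ itself is a perfect square with $q = \HC_n^2$, $r = 1$, $g_n = 0$, $c=0$), we see that $\HC_n^2$ is a perfect square polynomial of even degree $2n$. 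But Proposition~\ref{prop:IM18} as stated concerns ordinary projections $\leq$, not border projections $\trianglelefteq$, so the first task is to adapt its degree-parity argument to the border setting.

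First I would recall that $(\HC) \trianglelefteq_\pp (\HC^2)$ means there is a polynomially bounded $t$ with $\HC_n \trianglelefteq \HC^2_{t(n)}$ for all $n$, i.e.\ $\HC_n + \varepsilon h_n \leq \HC^2_{t(n)}$ over $\mathbb F(\varepsilon)$ for some $h_n \in \mathbb F[\varepsilon][\overline{\mathbf x}]$. Now I would pick a univariate polynomial $s(y)$ of odd degree larger than $2\deg(\HC^2_{t(n)}) = 4t(n)$ for the relevant $n$ --- but to get a \emph{single} obstruction rather than an asymptotic one, it is cleaner to argue as follows: if $(\HC) \trianglelefteq_\pp (\HC^2)$ then by transitivity of $\trianglelefteq_\pp$ and the density construction, or more directly, since $(\HC)$ is $\VNP$-$\leq_\pp$-complete, every $\pp$-family in $\VNP$ would be a border-$\pp$-projection of $(\HC^2)$; in particular the constant family $(s)$ for a fixed odd-degree univariate $s$ would satisfy $(s) \trianglelefteq_\pp (\HC^2)$, so $s \trianglelefteq \HC^2_{t}$ for some fixed $t$, meaning $s + \varepsilon h \leq \HC^2_t$ over $\mathbb F(\varepsilon)$ with $\deg s$ odd and $> 4t$ chosen in advance. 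Then I invoke the border analogue of Proposition~\ref{prop:IM18}.

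The border analogue I need is: if $s + \varepsilon h \leq \HC^2_t$ over $\mathbb F(\varepsilon)$ with $s$ univariate of odd degree $> 4t$, we derive a contradiction. Let $\gamma$ be the projection substitution witnessing $s + \varepsilon h \leq \HC^2_t$, so $\gamma(\HC^2_t) = s(y) + \varepsilon h(y, \ldots)$ as polynomials over $\mathbb F(\varepsilon)$ (in the variables of $s$ and $h$). Run the argument of Proposition~\ref{prop:IM18} over the field $\tilde{\mathbb F} = \mathbb F(\varepsilon)$ with $q = \HC^2_t$ (a perfect square), $g_t = 0$, $r = 1$, $c = 0$: we get $\deg_y(\gamma(\HC^2_t)) \leq d_1 = 2\deg \HC_t = 2t$ in the ``low'' case, contradicting that $\deg_y(s + \varepsilon h) \geq \deg s > 4t$ since the $\varepsilon$-terms cannot cancel the top-degree term of $s$ (the coefficient of $y^{\deg s}$ in $s + \varepsilon h$ is a nonzero element of $\mathbb F(\varepsilon)$ because $s$ has a nonzero $\mathbb F$-coefficient there); and in the ``high'' case $\deg_y(\gamma(\HC^2_t)) = \deg_y(\gamma(\HC_t)^2)$ is even, while $\deg_y(s + \varepsilon h)$ equals $\deg s$ (odd) by the same non-cancellation of the leading term. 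Either way we contradict the existence of $\gamma$, so $(s) \not\trianglelefteq_\pp (\HC^2)$, hence $(\HC) \not\trianglelefteq_\pp (\HC^2)$ and $(\HC^2) \notin \VNPC(\trianglelefteq_\pp)$ (it is in $\VNP$ trivially).

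The main obstacle is the bookkeeping around the $\varepsilon$: one must be careful that substituting variables by elements of $\mathbb F(\varepsilon)$ (including possibly $\varepsilon$ itself and $1/\varepsilon$) does not secretly change the $y$-degree of the image or allow cancellation of the top-degree monomial in $y$. The clean way to handle this is to observe that $s(y) + \varepsilon h$ has $y$-degree exactly $\deg s$ over $\mathbb F(\varepsilon)$, because the coefficient of $y^{\deg s}$ is of the form $a + \varepsilon(\cdots)$ with $a \in \mathbb F \setminus \{0\}$, hence nonzero in $\mathbb F(\varepsilon)$; and simultaneously the parity of $\deg_y(\gamma(\HC_t)^2)$ is even regardless of what $\gamma$ does, since it is literally a square of a polynomial over $\mathbb F(\varepsilon)$. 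So the degree-parity contradiction survives the passage to the border setting essentially verbatim --- this is really just Proposition~\ref{prop:IM18} applied over $\mathbb F(\varepsilon)$ together with the elementary observation that border projections only perturb by higher-order $\varepsilon$-terms which cannot touch the leading coefficient of a fixed polynomial over $\mathbb F$.
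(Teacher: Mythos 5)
Your high-level idea matches the paper's: exploit the fact that every projection of $\HC_t^2$ is a perfect square over $\mathbb F(\varepsilon)$, and argue that the target polynomial cannot be one. But your execution via degree parity has a genuine gap. You repeatedly assert that $\deg_y(s + \varepsilon h) = \deg s$, justified by the observation that the coefficient of $y^{\deg s}$ in $s + \varepsilon h$ is of the form $a + \varepsilon(\cdots)$ with $a \in \mathbb F \setminus \{0\}$ and hence nonzero. That observation is correct, but it only shows $\deg_y(s + \varepsilon h) \geq \deg s$. The definition of border projection places no bound on the $y$-degree of $h$: $h$ is an arbitrary polynomial over $\mathbb F[\varepsilon]$, so $\varepsilon h$ can introduce $y$-monomials of arbitrarily high degree, and in particular $\deg_y(s+\varepsilon h)$ can be \emph{even}. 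In that case neither branch of your degree/parity dichotomy yields a contradiction, and the argument collapses. (Also, a minor slip: instantiating Proposition~\ref{prop:IM18} with $q = \HC_t^2$, $g_n = 0$ gives $f_n = q(rg_n + cg_n^2) = 0$, not $\HC_t^2$; the intended instantiation is $q = 1$, $r = 0$, $c = 1$, $g_n = \HC_t$.)

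The paper sidesteps this exactly by \emph{not} relying on degree parity. It takes $s = x$ (degree 1) and directly proves that $x + \varepsilon g_n$ can never be a perfect square over $\mathbb F(\varepsilon)$, no matter what $g_n$ is and in particular no matter its degree. Writing a hypothetical square root $f = \sum_{i=0}^d c_i x^i$ with $c_i \in \mathbb F(\varepsilon)$, the argument examines the constant coefficient ($c_0^2 = \varepsilon g_n[0]$), the $x^d$ coefficient, and the $x^{2d}$ coefficient, showing that if $c_0 \neq 0$ one is forced into $c_d$ having a pole that contradicts $f^2[2d] = \varepsilon g_n[2d] \in \mathbb F[\varepsilon]$, while if $c_0 = 0$ then $f^2$ has no linear term. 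This coefficient-level analysis is precisely what replaces the failed ``$\deg_y$ is preserved'' step, and is the essential new work needed in the border setting; your proposal is missing that work.
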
 
\begin{proof}
We claim that every border-$\textsf{p}$-projection of $\mathrm{HC}_{n}^{2}$ is the square of a polynomial. The lemma follows from this fact, because for example the polynomial $x^3$ is not a square, but the constant $\pp$-family $(x^3)$ is in $\VNP$.

Let $f \in \mathbb{F}[\overline{\mathbf{x}}]$ be a border-$\textsf{p}$-projection of $\mathrm{HC}_{n}^{2}$.
We use \cite{LL:89} again:
\begin{equation}\label{convergeeq}
    \lim_{\varepsilon \to 0} \mathrm{HC}_{n}^{2}(\alpha_{1}, \alpha_{2}, \ldots , \alpha_{n^2}) = f
\end{equation}
for some $\alpha_{i}$ $\in$ $\mathbb{F}[\varepsilon,\varepsilon^{-1}] \cup \overline{\mathbf{x}}$.
We interpret $\mathrm{HC}_{n}(\alpha_{1}, \alpha_{2}, \ldots , \alpha_{n^2})$ as a univariate Laurent polynomial in $\varepsilon$ with coefficients in $\mathbb F[\overline{\mathbf{x}}]$. Let $M$ be the smallest exponent of~$\varepsilon$.
If $M<0$, then the univariate Laurent polynomial $\mathrm{HC}_{n}^2(\alpha_{1}, \alpha_{2}, \ldots , \alpha_{n^2})$ has a nonzero degree $2M$ term, which means that $\mathrm{HC}_{n}^2(\alpha_{1}, \alpha_{2}, \ldots , \alpha_{n^2})$ does not converge for $\varepsilon~\to~0$. This is ruled out by \eqref{convergeeq}. Therefore $M \geq 0$ and hence $L \coloneqq \lim_{\varepsilon\to 0}\mathrm{HC}_{n}(\alpha_{1}, \alpha_{2}, \ldots , \alpha_{n^2})$ exists.
Therefore, by continuity we have $f = \lim_{\varepsilon\to 0}\mathrm{HC}_{n}^2(\alpha_{1}, \alpha_{2}, \ldots , \alpha_{n^2}) = (\lim_{\varepsilon\to 0}\mathrm{HC}_{n}(\alpha_{1}, \alpha_{2}, \ldots , \alpha_{n^2}))^2 = L^2$.
\end{proof}

We now construct $(P) \in \VNPC(\trianglelefteq_\pp) \setminus \VNPC(\leq_\pp)$ via
\begin{equation*}
    P_n \coloneqq z^2(y\mathrm{HC}_n + y^2 \mathrm{HC}_n^2)
\end{equation*}
where $y$ and $z$ are variables outside the set of variables in $\mathrm{HC}_{n}$, for all~$n$.
\begin{lemma}\label{lem:IM18}
$(P) \not\in \VNPC(\leq_\pp)$
\end{lemma}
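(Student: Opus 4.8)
The plan is to show that $(P)\notin\VNPC(\leq_\pp)$ by showing that $(\HC)\not\leq_\pp (P)$, and in fact that even the constant family $(s)$ of a fixed odd-degree univariate polynomial satisfies $(s)\not\leq_\pp(P)$, which suffices since $(s)\leq_\pp(\HC)$ and $\leq_\pp$ is transitive. The point is that $P_n = z^2\bigl(y\,\HC_n + y^2\,\HC_n^2\bigr)$ has exactly the shape required by Proposition~\ref{prop:IM18}, so I want to read off the parameters and quote that proposition.

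Concretely, I would set $\tilde{\mathbb F}=\mathbb F$, let $(g)$ be the family with $g_n = y\,\HC_n$, take $q = z^2$ (a perfect square of even degree $d_1 = 2$), take $r = 1/y$\,---\,wait, that is not a polynomial; instead note $P_n = z^2\bigl(y\cdot(y\HC_n) + 1\cdot(y\HC_n)^2\bigr)$, so with $g_n := y\,\HC_n$ we have $r = y$ (a fixed polynomial of degree $d_2 = 1$), $q = z^2$ of even degree $d_1 = 2$ which is a perfect square, and $c = 1$. Then $P_n = q\bigl(r g_n + c g_n^2\bigr)$ exactly as in Proposition~\ref{prop:IM18}. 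Now pick any univariate polynomial $s$ of odd degree $> d_1 + 2d_2 = 4$, say $s(w) = w^5$. Proposition~\ref{prop:IM18} then gives $s\not\leq P_n$ for every $n$, hence $(s)\not\leq_\pp(P)$. Since $(s)\leq_\pp(\HC)$ via a single constant-substitution projection and reductions compose, if we had $(\HC)\leq_\pp(P)$ we would get $(s)\leq_\pp(P)$, a contradiction. Therefore $(\HC)\not\leq_\pp(P)$, i.e.\ $(P)\notin\VNPC(\leq_\pp)$.

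The only genuinely delicate point is making sure the algebraic identity is presented with $g_n = y\,\HC_n$ rather than $g_n = \HC_n$, so that the factor structure matches the statement of Proposition~\ref{prop:IM18} verbatim (in particular that the ``$r g_n$'' term has degree exactly $\deg q + \deg r + \deg g_n$ and the ``$g_n^2$'' term is genuinely a square times the perfect square $q$); everything else is a direct citation. One should also double-check that $(g) = (y\,\HC_n)$ is a legitimate $\pp$-family over $\mathbb F$ (it is, since we only added one variable and the degree went up by one), that $q = z^2$ and $r = y$ are fixed polynomials independent of $n$ (they are), and that $s$ of degree $5$ indeed satisfies the degree hypothesis. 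No step requires a computation beyond bookkeeping of degrees, so I expect no real obstacle here; the content of the lemma is entirely carried by Proposition~\ref{prop:IM18}, and the lemma is essentially a worked instance of it.
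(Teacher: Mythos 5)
Your overall approach matches the paper exactly: the paper's proof of this lemma is a one-line pointer to Proposition~\ref{prop:IM18}, and you are likewise reading $P_n$ in the form $q(rg_n + cg_n^2)$ and invoking that proposition to get $(s)\not\leq_\pp(P)$ for a small odd-degree univariate, hence $(\HC)\not\leq_\pp(P)$ by transitivity. However, your stated decomposition contains an algebraic slip. You write
\[
P_n = z^2\bigl(y\cdot(y\HC_n) + 1\cdot(y\HC_n)^2\bigr),
\]
but the right-hand side expands to $z^2\bigl(y^2\HC_n + y^2\HC_n^2\bigr)$, which is \emph{not} equal to $P_n = z^2\bigl(y\HC_n + y^2\HC_n^2\bigr)$ (the first term has an extra factor of $y$). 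Consequently your chosen parameter $r = y$ (so $d_2 = 1$) does not yield the required identity $P_n = q(rg_n + cg_n^2)$. The correct and in fact simpler reading is: with $g_n := y\HC_n$ one has $P_n = z^2\bigl(g_n + g_n^2\bigr)$, so take $q = z^2$ ($d_1 = 2$, a perfect square), $r = 1$ ($d_2 = 0$), and $c = 1$. The threshold in Proposition~\ref{prop:IM18} is then $d_1 + 2d_2 = 2$, so any univariate $s$ of odd degree $\geq 3$ works; your choice $s(w) = w^5$ still satisfies this looser bound, so your conclusion is unaffected, but the factoring identity and the value $r = y$ should be corrected before the citation is legitimate.
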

\begin{proof}
This is a direct consequence of Proposition~\ref{prop:IM18} with $g_n=y \HC_n$, $c_n=r=1$, $q=z^2$.
\end{proof}
\begin{lemma}\label{lem3}
$(P)$ $\in$ $\VNPC(\trianglelefteq_{\pp})$.
\end{lemma}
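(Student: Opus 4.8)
There are two claims to establish: $(P)\in\VNP$, and $(\HC)\trianglelefteq_\pp(P)$. The first is routine: $\VNP$ is closed under addition and multiplication, $(\HC)\in\VNP$, and each $P_n$ has $n^2+2$ variables and degree $2n+4$, so $(P)$ is a $\pp$-family built from $(\HC)$ by a bounded number of ring operations and therefore lies in $\VNP$. The substance of the lemma is the border-$\pp$-projection $(\HC)\trianglelefteq_\pp(P)$. Unwinding the definitions, it suffices to produce, for every $n$, a projection over $\mathbb F(\varepsilon)$ of $P_n$ (so the blow-up function $t$ can be taken to be the identity) that equals $\HC_n+\varepsilon h_n$ for some polynomial $h_n$ over $\mathbb F[\varepsilon]$.

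The plan is to leave the $n^2$ variables of $\HC_n$ fixed and substitute elements of $\mathbb F(\varepsilon)$ for the two extra variables $y,z$. Since $P_n=z^2y\,\HC_n+z^2y^2\,\HC_n^2$, the substitution $y\mapsto\alpha$, $z\mapsto\beta$ with $\alpha,\beta\in\mathbb F(\varepsilon)$ yields $\beta^2\alpha\,\HC_n+\beta^2\alpha^2\,\HC_n^2$. I would choose $\alpha,\beta$ so that $\beta^2\alpha=1$ while $\beta^2\alpha^2\in\varepsilon\mathbb F[\varepsilon]$: the result is then $\HC_n+(\beta^2\alpha^2)\HC_n^2=\HC_n+\varepsilon h_n$ with $h_n:=\varepsilon^{-1}\beta^2\alpha^2\,\HC_n^2$ a polynomial over $\mathbb F[\varepsilon]$. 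Concretely, $\alpha=\varepsilon^2$ and $\beta=\varepsilon^{-1}$ work, since $\beta^2\alpha=\varepsilon^{-2}\varepsilon^2=1$ and $\beta^2\alpha^2=\varepsilon^{-2}\varepsilon^4=\varepsilon^2$; so the projection $\gamma\colon y\mapsto\varepsilon^2,\ z\mapsto\varepsilon^{-1}$ gives $\gamma(P_n)=\HC_n+\varepsilon^2\HC_n^2=\HC_n+\varepsilon\cdot(\varepsilon\,\HC_n^2)$. Hence $\HC_n\trianglelefteq P_n$ for all $n$, so $(\HC)\trianglelefteq_\pp(P)$; combined with $(P)\in\VNP$ this gives $(P)\in\VNPC(\trianglelefteq_\pp)$.

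The only delicate point — and the reason $P_n$ is built with the factor $z^2$ and with the monomials $y$ and $y^2$ — is that the substituted constants must lie in $\mathbb F(\varepsilon)$ itself and not in a ramified extension such as $\mathbb F(\sqrt\varepsilon)$: the naive attempt to force $\beta^2\alpha=1$ with $\alpha=\varepsilon$ would require $\beta=\varepsilon^{-1/2}\notin\mathbb F(\varepsilon)$, whereas rescaling $y$ by $\varepsilon^2$ makes $\beta^{-2}=\alpha=\varepsilon^2$ a square in $\mathbb F(\varepsilon)$. I do not expect any further obstacle here; in particular, unlike the non-inclusion lemmas, no degree/parity argument of the kind in Proposition~\ref{prop:IM18} is needed in this direction, and the verification that $(P)\in\VNP$ is immediate.
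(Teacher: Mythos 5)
Your argument is correct and is essentially identical to the paper's: the same substitution $y\mapsto\varepsilon^2$, $z\mapsto\varepsilon^{-1}$ giving $\HC_n+\varepsilon^2\HC_n^2$, hence $\HC_n\trianglelefteq P_n$. The paper leaves the routine check $(P)\in\VNP$ implicit, but otherwise the proofs coincide.
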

\begin{proof}
Consider the projection map $\gamma_\varepsilon$ defined by:
\begin{equation*}
    y \mapsto \varepsilon^2 \hspace{5 mm} \text{and} \hspace{5 mm} z \mapsto \tfrac{1}{\varepsilon}
\end{equation*}

For every $n \in \mathbb N$ we have $\gamma_\varepsilon(P_n) \leq P_n$ and
$\gamma_\varepsilon(P_n) = \mathrm{HC}_{n} + \varepsilon^{2}\mathrm{HC}_{n}^{2}$.
Hence $\HC_n \trianglelefteq P_n$.
Therefore 
$(\HC) \trianglelefteq_\pp (P)$, which finishes the proof.
\end{proof}

\subsection{Arithmetic formulas}\label{sec:formulas}
Lemma~\ref{lem:roots} does not directly work for arithmetic formulas.
In order to construct a $\pp$-family in $\VFC(\leq_\cc)\setminus \VFC(\trianglelefteq_\pp)$ we proceed as follows.
Let $(f) \in \VFC(\leq_\cc)$, for example iterated $3 \times 3$ matrix multiplication.
Define $(Q)$ via $Q_n \coloneqq (x+yf_n)^2$.
We have $(Q) \notin \VNPC(\trianglelefteq_\pp)$ in the same way as Lemma~\ref{2}.
Moreover, setting $x=0$ and $y=1$ we obtain $\beta_1 \coloneqq f_n^2$.
If instead we set $x=1$ and $y=1$, then we obtain
$\beta_2 \coloneqq f_n^2 + 2f_n + 1$.
Hence $f_n = \frac 1 2 (\beta_2-\beta_1-1)$. Thus $(f)\leq_\cc(Q)$ and hence $(Q) \in \VFC(\leq_\cc)$.

\section{The box topology is the correct topology}
\label{sec:boxtopology}

In this section we prove that the closure of the classical classes $\VF$, $\VBP$, $\VP$, and $\VNP$ in $R$ coincides with the classical definitions of $\overline{\VF}$, $\overline{\VBP}$, $\overline{\VP}$, and $\overline{\VNP}$ from the literature.

Given a polynomial $f$ with $v$ many variables and degree~$d$. Let $V$ denote the finite dimensional vector space of polynomials of degree at most $d$ in $v$ many variables, endowed with the standard Euclidean metric~$\rho$.
Let $\dc(f)$ denote the \emph{determinantal complexity} of $f$, i.e, the number of rows of the smallest square matrix $M$ with affine linear entries with $\det(M)=f$.
Valiant proved that that $\dc(f)$ is always finite.
Let $\underline{\dc}(f)$ denote the smallest $r$ such that there exists a sequence $(f_m)$ of polynomials with $\dc(f_m) \leq r$ such that $\lim_{m\to\infty}f_m = f$. The set of $\pp$-families $(f)$ where $\underline{\dc}( \ (f) \ )$ is polynomially bounded is called $\overline{\VBP}$ in the literature. We call it $\VBP'$ here to make it possible to distinguish it from the closure of $\VBP$.
We define $\VF'$, $\VP'$, and $\VNP'$ analogously via their respective complete polynomials.

\begin{proposition}
$\VF' = \overline{\VF}$ and
$\VBP' = \overline{\VBP}$ and
$\VP' = \overline{\VP}$ and
$\VNP' = \overline{\VNP}$.
\end{proposition}
\begin{proof}
We only show the argument for $\VBP$. The other cases are analogous.

Let $(f)\in\VBP'$. This means that $\underline{\dc}( \ (f) \ )$ is polynomially bounded, say $\underline{\dc}( \ (f) \ ) = p(n)$ for a polynomially bounded function $p$.
Then for every $n$ and every $\varepsilon_n \in \mathbb R_{>0}$ there exists $g_{n,\varepsilon_n}$ with $\rho(f_n,g_{n,\varepsilon_n})<\varepsilon_n$ and $\dc(g_{n,\varepsilon_n})\leq p(n)$.
Taking the downward directed set $D\coloneqq (\mathbb R_{>0})^{\mathbb N}$ and setting $g_{n,(\varepsilon)} \coloneqq g_{n,\varepsilon_n}$,
this defines a net $((g))$ in $\VBP$. It remains to show that $((g))$ converges to $(f)$. Let $(\delta)\in D$. Then $(g_{(\delta)}) \in B_{(f),(\delta)}$ and for all $(\varepsilon) \preceq (\delta)$ we also have $(g_{(\varepsilon)}) \in B_{(f),(\delta)}$ by the definition of $g_{n,(\varepsilon)}$. Hence $((g))$ converges to $(f)$.

For the other direction we introduce a new concept.
For every $(f) \in R$ we define a \emph{threshold distance sequence} $(\delta^\textup{thresh}) \in D$ by defining the $n$-th entry $\delta^\textup{thresh}_n$ as follows.
Let $v$ be the number of variables of $f_n$ and let $d$ be the degree of $f_n$. Let $V$ denote the finite dimensional vector space of polynomials of degree at most $d$ in $v$ many variables. Let $V_r \coloneqq \{ g \in V \mid \underline{\dc}(g)\leq r \}$. Each set $V_r$ is closed in $V$, in fact $V_r = \overline{\{ g \in V \mid \dc(g)\leq r \}}$.
We have $f_n \in V_r$ for all $r \geq \underline{\dc}(f)$, and $f_n \notin V_r$ otherwise.
We define $\delta^\textup{thresh}_n$ as the minimum distance between $f_n$ and $V_r$ for $r < \underline{\dc}(f)$, formally
$\delta^\textup{thresh}_n \coloneqq \min\{\rho(f_n,V_r) \mid r < \underline{\dc}(f)\}
$ \footnote{For determinantal complexity this equals $\rho(f_n,V_{\underline{\dc}(f)-1})$, but that is not important here.}.
Here we use the usual definition of a distance of a point to a set: $\rho(f_n,V_r)\coloneqq\inf\{\rho(f_n,g)\mid g \in V_r\}$.
Note that $\delta^\textup{thresh}_n> 0$, because each $V_r$ is a closed set (a point has distance 0 to a closed set if and only if that set contains the point).

Let $(f)\in\overline{\VBP}$. Then there exists a downward directed set $(\mathscr D,\preceq)$ and a net $((g)): \mathscr D \to \VBP$ that converges to $(f)$.
Hence for every $(\delta)\in D$ there exists $\varepsilon\in \mathscr D$ such that for all $\varepsilon'\preceq \varepsilon$ we have $(g_{\varepsilon'}) \in B_{(f),(\delta)}$.
In particular this holds for $(\delta^\textup{thresh})$, so $(g_{\varepsilon}) \in B_{(f),(\delta^\textup{thresh})}$ and $\dc( \ (g_{\varepsilon}) \ )$ is polynomially bounded.
By the definition of $\delta^\text{thresh}$ it follows that for all $n$ we have $\underline{\dc}(f_n) \leq \dc(g_{n,\varepsilon})$.
Hence $\underline{\dc}( \ (f) \ )$ is polynomially bounded, i.e., $(f)\in \VBP'$.
\end{proof}

It is clear that this proof technique is not limited to $\VF$, $\VBP$, $\VP$, and $\VNP$, but it works for every algebraic complexity class that is defined by circuits of a specific kind growing at a specific rate.

\end{document}